\documentclass[a4paper,twocolumn,11pt,unpublished]{quantumarticle}
\pdfoutput=1
\usepackage[utf8]{inputenc}
\usepackage[english]{babel}
\usepackage[T1]{fontenc}
\usepackage{amsmath}
\usepackage{hyperref}

\usepackage{tikz}
\usepackage{lipsum}

\usepackage{graphicx}
\usepackage{cite}
\usepackage{caption}
\usepackage{subcaption}

\usepackage[ruled,vlined]{algorithm2e}
\usepackage{booktabs}
\usepackage{amsmath, amsfonts,amssymb, amsthm}

\usepackage{algorithmic}
\usepackage{array}
\usepackage{quantikz}
\usepackage{orcidlink}
\usepackage{xcolor}
\usepackage{titlecaps}
\usepackage[final,defaultcolor=red]{changes}

\newtheorem{lemma}{Lemma}
\newtheorem{theorem}{Theorem}
\begin{document}
\title{Lowering LCU Circuit Width through Maximum-Weight Birkhoff-von Neumann Decomposition}
\author{Ammar~Daskin\orcidlink{0000-0002-1497-5031}}      
\affiliation{Department of Computer Engineering, Istanbul Medeniyet University,
Istanbul, Turkiye, 34000.}
\email{adaskin25@gmail.com}

\maketitle
\begin{abstract}
While classical Sinkhorn scaling applies to nonnegative matrices, we show that any complex square matrix whose element-wise absolute value has total support can be mapped to a phased doubly stochastic matrix, or alternatively embedded into a larger doubly stochastic matrix via matrix completion.

Standard Birkhoff-von~Neumann and Pauli decompositions represent such matrices as linear combinations of $O(N^2)$ permutation or Pauli terms, leading to a large ancilla overhead in a quantum Linear Combination of Unitaries (LCU) implementation. We prove that a bottleneck variant of Birkhoff's algorithm reduces the number of permutations to $O(N\log(N/\varepsilon))$, where $\varepsilon$ is the $\ell_1$-norm approximation error of the reconstructed matrix, and demonstrate empirically that a largest-weight greedy variant requires only $\approx 2N$ terms for dense matrices (the exact average observed is $\approx 2.4N$). The quadratic reduction in term count directly shrinks the ancilla register from $2\log_2 N$ to $\log_2 N$ qubits, shortens the SELECT circuit, and is especially valuable in fixed-Hadamard LCU architectures, where---under the standard uniform-coefficient assumption---the success probability scales as $1/K$; our convex decomposition achieves $\alpha=1$ exactly and therefore avoids this penalty. The approach enables compact quantum implementations of dense operators appearing in optimal transport, non-Hermitian simulation, and other settings amenable to Sinkhorn preconditioning.

Furthermore, because the decomposition is a convex combination, the LCU normalization constant for the scaled matrix $S$ is exactly $\alpha_S = 1$, and the uniform superposition is an eigenvector of $S$ with eigenvalue~1. This structure can be exploited to achieve high success probability without amplitude amplification in many practical scenarios, including quantum walks and Markov chain simulations.

\end{abstract}

\section{Introduction}
Linear combination of unitaries (LCU), usually representing a non-unitary matrix, can be implemented as a quantum circuit by using an ancilla register to control each term and combine them through coefficient operations on this ancilla register. While the technique was originally introduced as a single routine for Hamiltonian simulation \cite{childs2012hamiltonian}, it was later used as a subroutine for implementing Hamiltonian simulation via Taylor expansion \cite{berry_simulating_2015} and has evolved into one of the most versatile primitives in quantum linear algebra algorithms and applications \cite{martyn_grand_2021}.  The key insight---that a non-unitary operator can be embedded into a larger unitary by coherently combining individually unitary terms with the help of an ancillary register---means that LCU is not merely a method for building circuits; it is a general prescription for representing generic matrices on a quantum computer. The LCU construction is the archetypal example of a block encoding: a unitary $U$ whose upper-left block (when the ancilla is projected onto $\ket{0}$) equals the target matrix $A/\alpha$. Block encoding is the input model for the quantum singular value transformation (QSVT), which unifies most known quantum speedups within a single framework \cite{gilyen_quantum_2019,martyn_grand_2021}. Consequently, any technique that produces compact block encodings immediately enables polynomial transformations of the encoded matrix via QSVT. 
LCU therefore sits at the center of modern quantum algorithm design: if a matrix admits an efficient LCU decomposition, the full machinery of QSVT can be applied to it.
This representational power allows LCU to underpin quantum algorithms as diverse as linear system solvers \cite{camps_explicit_2024}, preconditioners for QSVT-based solvers \cite{lapworth_preconditioned_2025}, differential equations \cite{bharadwaj2025compact,gharat2026quantum,xin2020quantum}, and machine learning models \cite{zhai2026quantum,heredge2025nonunitary,garcia2025nuclear}. All of these applications benefit from compact LCU representations.

Recent work has pushed LCU beyond generic Pauli decompositions toward problem-tailored bases. For instance, \cite{loaiza2025majorana} showed how to use Majorana tensor decomposition for fermionic Hamiltonians to obtain an LCU representation. \cite{gnanasekaran_efficient_2026} introduced the sigma basis (a set of structured non-unitary operators) that achieves poly-logarithmic term counts for sparse matrices arising from partial differential equations, an exponential improvement over the standard Pauli approach. In the same spirit, the LCU construction leverages Dicke state preparation to yield a constant-depth SELECT oracle, achieving an order-of-magnitude reduction in CNOT count for spin model Hamiltonians \cite{della2025efficient}. These results demonstrate that replacing generic unitary bases with domain-specific ones can dramatically compress LCU decompositions---a principle that our permutation-based approach takes to its logical extreme for doubly stochastic matrices.

Stochastic matrices, which describe classical random walks, are a natural fit for LCU because they can be block-encoded and then processed by quantum walk techniques. \cite{camps_explicit_2024} gave explicit LCU circuits for sparse stochastic matrices and showed how these block encodings directly yield efficient quantum walks, connecting the classical mixing properties of a Markov chain to its quantum acceleration via Szegedy-style quantization \cite{szegedy_quantum_2004}.  It has also been shown that classical Sinkhorn iterations can be accelerated with quantum amplitude estimation \cite{orts_entropic_2026}.

Sinkhorn's theorem can be used to convert any non-negative square matrix into a doubly stochastic matrix. 
While classical Sinkhorn scaling applies strictly to entry-wise nonnegative matrices with total support, any arbitrary complex or general square matrix $A$ can be accommodated by applying the scaling transformation to its entry-wise modulus matrix $|A|$, where $[|A|]_{ij} = |A_{ij}|$ \cite{hutchinson2023scaling}. Alternatively, general matrices can be mapped via diagonal scaling to doubly quasi-stochastic (equisum) matrices where the row and column sums equal unity despite containing complex or signed entries \cite{pereira2014theory}. In the quantum domain, this framework elegantly maps to operator scaling algorithms designed for balancing quantum channels and positive maps \cite{idel2016review}.

Birkhoff's theorem guarantees that any doubly stochastic matrix can be written as a convex combination of permutation matrices. While \cite{de2019birkhoff} noted that Birkhoff's theorem, together with analogous decomposition theorems for unitary matrices, reveals a structural connection between quantum circuits and linear classical reversible circuits, Daskin \cite{daskin2024quantum} recognized that this classical decomposition maps naturally onto quantum circuits: a matrix (or its element-wise absolute value) is scaled by two diagonal matrices to yield a phased or standard doubly stochastic matrix, which is then decomposed into a linear combination of phased or standard permutations.
Since permutation matrices are purely classical reversible gates, each term can be implemented by a simple controlled bit-permutation. The follow-up error analysis \cite{daskin_error_2025} by the same author further demonstrated that matrices expressed as convex combinations of permutations exhibit intrinsic resilience to bit-flip errors, an attractive property for near-term implementations.

Despite these elegant algorithmic pipelines, the practical utility of LCU decomposition techniques is limited by the number of terms produced by standard Birkhoff or Pauli decomposition algorithms. For a generic $N\times N$ matrix with $N = 2^n$ (i.e., $n$ qubits), the Pauli decomposition requires as many as $4^n = N^2$ terms. Similarly, the Birkhoff-based decomposition that follows the classic greedy approach, peeling off minimum-weight perfect matchings, can generate $O(N^2)$ terms for dense doubly stochastic matrices, imposing a heavy ancilla overhead and a correspondingly small success probability for the block encoding.

\paragraph{Main contribution of the paper:}
The present work addresses precisely this bottleneck: we adopt the largest‑weight greedy Birkhoff decomposition --- a simple heuristic known in combinatorial optimization \cite{brualdi2006combinatorial,liu2015scheduling,bojja2018costly,valls2021birkhoff} --- and show that it is  remarkably effective {for building block encodings of phased and standard doubly stochastic matrices}.
We prove that a related bottleneck variant achieves $O(N \log(N/\varepsilon))$ permutations, and empirically demonstrate that the largest‑weight variant requires only $\approx 2N$ terms for dense matrices (see Fig.~\ref{fig:experimental}), reducing ancilla qubits from $2\log_2 N$ to $\log_2 N$ compared to Pauli decompositions. 

This enables efficient quantum linear algebra for optimal transport, Sinkhorn‑scaled operators, and other applications involving doubly stochastic matrices. In particular, entropic optimal transport plans, which are naturally dense and doubly stochastic, become efficiently block-encodable with only a few dozen permutations even for moderate problem sizes. Similarly, non-Hermitian operators that can be Sinkhorn-scaled to doubly stochastic form benefit from the same compression, opening a practical route to simulating open quantum systems within the QSVT framework.

\section{Linear Combination of Unitaries and Linear Combination of Permutations}

An LCU implementation uses an ancilla register to control each unitary term and to combine them via coefficient operations, resulting in a circuit whose matrix representation embeds the non‑unitary target matrix $A$ into a larger unitary $\mathcal{U}$:
\begin{equation}
\mathcal{U} = \begin{bmatrix}
    A/\alpha & B\\
    C & D
\end{bmatrix},
\qquad\text{with}\quad
A = \sum_{k=1}^K \beta_k U_k .
\end{equation}
The normalization constant is $\alpha = \sum_k |\beta_k|$; the success probability of post‑selecting the ancilla on $\ket{0}$ is $\|A\ket{\psi}\|^2/\alpha^2$.  
When the number of terms $K$ is large, $\alpha$ typically grows, which can drastically reduce the success probability.  
The ancilla register itself requires $\lceil\log_2 K\rceil$ qubits, so reducing $K$ simultaneously decreases the circuit width, shortens the controlled‑SELECT depth, and keeps $\alpha$ small, thereby improving both the implementation cost and the success rate.

For the permutation‑based construction, we target a phased doubly stochastic matrix $S$. While standard Sinkhorn scaling requires nonnegativity, it can be generalized to an arbitrary matrix $A \in \mathbb{C}^{N \times N}$ by considering its element-wise absolute value $|A|$. If $|A|$ has total support, Sinkhorn's theorem~\cite{sinkhorn1967concerning} guarantees the existence of diagonal matrices $D_1, D_2$ with positive real entries such that $S_{\text{abs}} = D_1 |A| D_2$ is doubly stochastic. We then define the target phased matrix as $S = D_1 A D_2$, which satisfies $|S| = S_{\text{abs}}$.
Once $S$ is obtained, it can be decomposed as a convex combination of phased permutation matrices, $S = \sum_{k=1}^K w_k P_k$ with $\sum_k w_k = 1$, where each $P_k$ has exactly one non-zero entry per row and column, and that entry has the form $e^{i\theta_{ij}}$ (i.e., a unit-modulus complex number). The original matrix is recovered as
\begin{equation}
A = D_1^{-1} S D_2^{-1} = \sum_{k=1}^K w_k \big(D_1^{-1} P_k D_2^{-1}\big).
\end{equation}
\begin{theorem}[Phased Birkhoff--von Neumann Decomposition]   
Let $A \in \mathbb{C}^{N \times N}$ be a complex matrix whose element-wise absolute matrix $|A|$ has total support. Then, there exist diagonal scaling matrices $D_1, D_2$ with positive real diagonal entries such that the scaled matrix $S = D_1 A D_2$ can be decomposed into a convex combination of unitary phased permutation matrices:
\begin{equation}
S = \sum_{k=1}^K w_k P_k,
\end{equation}
where $w_k \ge 0$, $\sum_{k=1}^K w_k = 1$, and $P_k^\dagger P_k = I$ for all $k$.
\end{theorem}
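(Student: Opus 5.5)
The plan is to reduce the statement to the classical Birkhoff--von~Neumann theorem applied to the modulus matrix, and then lift the resulting decomposition back to $S$ by reattaching the phases entrywise. First, since $|A|$ is nonnegative with total support, Sinkhorn's theorem~\cite{sinkhorn1967concerning} gives diagonal matrices $D_1, D_2$ with positive diagonal entries such that $S_{\text{abs}} = D_1|A|D_2$ is doubly stochastic. Set $S = D_1 A D_2$. Because $D_1, D_2$ are positive and diagonal, $|S_{ij}| = (D_1)_{ii}|A_{ij}|(D_2)_{jj} = (S_{\text{abs}})_{ij}$, so $|S| = S_{\text{abs}}$. Write $S_{ij} = |S_{ij}|e^{i\theta_{ij}}$, fixing $\theta_{ij}$ arbitrarily (say $0$) wherever $S_{ij}=0$.

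Next, apply Birkhoff's theorem to $S_{\text{abs}}$ to obtain $S_{\text{abs}} = \sum_{k=1}^K w_k Q_k$, where $w_k > 0$, $\sum_k w_k = 1$, and each $Q_k$ is a permutation matrix with permutation $\sigma_k$. The key observation concerns the supports: if $(Q_k)_{i\sigma_k(i)} = 1$ and $w_k>0$, then $(S_{\text{abs}})_{i\sigma_k(i)} \ge w_k > 0$, since all terms are nonnegative. Thus every permutation used lies inside the support of $S$. The standard constructive proof, which repeatedly peels off a perfect matching in the positive support via Hall's theorem, produces exactly such permutations, although the support containment above already follows from nonnegativity.

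Now define the phased permutations by $P_k = \Phi \circ Q_k$, where $\Phi_{ij} = e^{i\theta_{ij}}$ and $\circ$ is the Hadamard product. Each $P_k$ has exactly one unit-modulus entry in every row and column, so $P_k = \Lambda_k Q_k$ with $\Lambda_k$ a diagonal unitary. Hence $P_k^\dagger P_k = Q_k^T\Lambda_k^\dagger\Lambda_k Q_k = I$. Finally, $\sum_k w_k P_k = \Phi\circ \sum_k w_k Q_k = \Phi\circ S_{\text{abs}} = S$, using linearity of the Hadamard product and $S_{ij} = e^{i\theta_{ij}}|S_{ij}|$. Zero entries cause no trouble, because no $Q_k$ touches them.

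I expect no serious obstacle. The only delicate point is making sure that the phase attached to each permutation entry is the same across all $k$, so that the sum reconstructs $S$ without cancellations. The Hadamard-product formulation handles this automatically, because a single phase matrix $\Phi$ is shared by all terms. The convex weights themselves are inherited unchanged from the classical decomposition. The existence of the scaling under total support (rather than mere support) is assumed from Sinkhorn's theorem as cited in the paper.
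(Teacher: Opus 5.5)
Your proposal is correct and follows essentially the same route as the paper's proof. Both Sinkhorn-scale $|A|$, set $S = D_1 A D_2$ so that $|S| = S_{\text{abs}}$, apply the classical Birkhoff--von~Neumann theorem to $S_{\text{abs}}$, attach the same phases $e^{i\theta_{ij}}$ entrywise to every permutation, and then check unitarity and reconstruction; your Hadamard-product notation and your explicit phase convention at zero entries (where the paper's $\theta_{ij} = \arg(S_{ij})$ is formally undefined) only tidy up that argument.
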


\begin{proof}
Because $|A|$ is a nonnegative matrix with total support, Sinkhorn's theorem~\cite{sinkhorn1967concerning} guarantees the existence of diagonal matrices $D_1, D_2$ with strictly positive real entries such that 
\begin{equation}
S_{\text{abs}} = D_1 |A| D_2
\end{equation}
is a doubly stochastic matrix. Construct the phased matrix $S = D_1 A D_2$. Since $D_1$ and $D_2$ are real and positive, the element-wise absolute values satisfy $|S|_{ij} = (D_1)_{ii} |A|_{ij} (D_2)_{jj} = (S_{\text{abs}})_{ij}$. Thus, the element-wise absolute matrix of $S$ is exactly the doubly stochastic matrix $S_{\text{abs}}$.

By the classical Birkhoff--von Neumann theorem, $S_{\text{abs}}$ can be expressed as a convex combination of standard permutation matrices $\Pi_k$:
\begin{equation}
S_{\text{abs}} = \sum_{k=1}^K w_k \Pi_k,
\end{equation}
where $w_k \ge 0$ and $\sum_k w_k = 1$. Let $\theta_{ij} = \arg(S_{ij})$ represent the phase of each entry in $S$, such that $S_{ij} = |S|_{ij} e^{i\theta_{ij}}$. For each standard permutation matrix $\Pi_k$ in the decomposition, we define a corresponding phased permutation matrix $P_k$ element-wise as:
\begin{equation}
(P_k)_{ij} = \begin{cases} 
e^{i\theta_{ij}} & \text{if } (\Pi_k)_{ij} = 1, \\
0 & \text{if } (\Pi_k)_{ij} = 0.
\end{cases}
\end{equation}
Because $\Pi_k$ has exactly one entry equal to $1$ in every row and column, $P_k$ has exactly one non-zero entry of unit magnitude ($e^{i\theta_{ij}}$) in every row and column. Thus, $P_k^\dagger P_k = I$, verifying that each $P_k$ is strictly unitary.

Finally, we evaluate the entry-wise summation of the phased combination:
\begin{equation}
\left( \sum_{k=1}^K w_k P_k \right)_{ij} = \sum_{k: (\Pi_k)_{ij}=1} w_k e^{i\theta_{ij}} = e^{i\theta_{ij}} \sum_{k: (\Pi_k)_{ij}=1} w_k.
\end{equation}
From the classical decomposition of $S_{\text{abs}}$, the sum of weights $w_k$ for all permutations containing an edge at $(i,j)$ is precisely $(S_{\text{abs}})_{ij} = |S|_{ij}$. Substituting this back yields:
\begin{equation}
e^{i\theta_{ij}} |S|_{ij} = S_{ij},
\end{equation}
which proves that $\sum_k w_k P_k = S$.
\end{proof}
To the best of our knowledge, this constitutes a novel extension of Birkhoff's theorem to complex matrices: any matrix whose element-wise absolute value is doubly stochastic can be expressed as a convex combination of unitary phased permutation matrices.

{Because phased permutation matrices are strictly unitary ($P_k^\dagger P_k = I$), they map seamlessly onto the terms of an LCU framework.  
However, reconstructing the original matrix as $A = D_1^{-1} S D_2^{-1}$ requires the non-unitary positive diagonal matrices $D_1^{-1}$ and $D_2^{-1}$.  
These diagonal factors \emph{cannot} be absorbed into the individual permutation terms: doing so would convert each unitary $P_k$ into a non-unitary weighted permutation
$T_k = D_1^{-1}P_kD_2^{-1}$, which is not a valid LCU unitary and would require amplitude loading or oracle queries for every term.  
We therefore avoid per-term weighting entirely and handle the diagonal scaling in one of two ways:
(i) use the matrix-completion construction described next, which eliminates the diagonal factors altogether; or
(ii) block-encode the two diagonal matrices once, independently of the number of LCU terms, as quantified in Sec.~\ref{sec:norm-properties}.
}

\subsection{Completion to a larger matrix}
Note that Sinkhorn scaling modifies the eigenspectrum and requires the
non‑commuting diagonal matrices $D_1^{-1}, D_2^{-1}$ in the reconstruction
of $A$.  An alternative is to \emph{complete} $A$ to a larger doubly
stochastic matrix by adding auxiliary rows and columns, leaving the
original block unchanged.  For instance, given a square (possibly
symmetric) matrix $A$ one can construct
\begin{equation}
   M = \begin{bmatrix}
A & R \\[2pt]
R^\top & \Gamma
\end{bmatrix}
\quad\text{or}\quad
M = \begin{bmatrix}
A & \operatorname{diag}(r) \\[2pt]
\operatorname{diag}(r) & A
\end{bmatrix}, 
\end{equation}
where the additional blocks $R$, $\Gamma$, or the vector $r$ are chosen so
that $M$ is doubly stochastic (or at least has uniform row and column sums after a trivial overall scaling).  Because $M$ is doubly stochastic,
Birkhoff’s theorem applies directly, giving a convex combination
$M = \sum_k w_k P_k$ with permutation matrices of the larger dimension.
The original matrix $A$ is then recovered by projecting the block encoding onto the appropriate subspace, i.e., by preparing an input state that is supported only on the first $N$ basis vectors and post‑selecting (or simply ignoring) the auxiliary components.  While this completion approach embeds $A$ as a principal submatrix of $M$, it does \emph{not} preserve the eigenvalues of $A$ (as the eigenvalues of a principal submatrix generally interlace rather than coincide with those of the full matrix). However, it successfully embeds the action of $A$ into a unitary block encoding without requiring separate diagonal unitaries, at the price of a moderately increased system size. For arbitrary signed or complex matrices, this completion is typically applied to the element-wise absolute value $|A|$ to obtain a doubly stochastic $M$, after which the phased permutation synthesis described in Theorem~1 is used to recover the phases of $A$.

Both phased Sinkhorn scaling and matrix completion reduce the problem of block‑encoding an arbitrary matrix to decomposing a phased or standard doubly stochastic matrix into a small number of permutations. The choice between these two preprocessing pipelines depends heavily on the specific application requirements and the acceptable overhead constraints in physical qubit counts.
\subsection{Norm properties and success probability}
\label{sec:norm-properties}

The Sinkhorn scaling $S = D_1 A D_2$ and the convex Birkhoff decomposition $S = \sum_{k=1}^K w_k P_k$ give
$A = D_1^{-1} S D_2^{-1}$.

{To block‑encode $A$, we combine the LCU block encoding of $S$ with a block encoding of the non-unitary diagonal scaling matrices $D_1^{-1}$ and $D_2^{-1}$.  
Because $D_1^{-1}$ and $D_2^{-1}$ are real, positive diagonal matrices rather than unitaries, they cannot be applied as simple standalone gates.  
Instead, any real diagonal matrix $D = \operatorname{diag}(d_1, \dots, d_N)$ can be embedded into a quantum circuit using one ancillary qubit and state-dependent rotations of the form
$\ket{i}\ket{0} \mapsto \ket{i}\bigl(\tilde{d}_i\ket{0} + \sqrt{1-\tilde{d}_i^2}\ket{1}\bigr)$,
where $\tilde{d}_i$ denotes the normalized diagonal entries.  
The required rotation angles can be synthesized efficiently by expanding the diagonal profile into a Walsh-series decomposition, requiring $\mathcal{O}(N)$ single-qubit rotations and $\mathcal{O}(\log N)$ additional ancilla qubits~\cite{welch2014efficient}.  
With more advanced QROM or amplitude-loading techniques~\cite{grover2002creating,babbush2018encoding,berry2019qubitization,low2024trading,low2019hamiltonian,GuillermoAlonso2024}, this diagonal block encoding can be made poly-logarithmic in $N$ at the cost of additional ancillas.  
As a further alternative, since $D_1^{-1}$ and $D_2^{-1}$ are diagonal and hence sparse, they can be embedded using standard sparse-matrix block-encoding methods~\cite{berry2007efficient}.
}

Furthermore, this diagonal overhead is \emph{additive and independent of $K$}: it is performed once for the two diagonal matrices $D_1^{-1}$ and $D_2^{-1}$, not for each permutation term.  
The LCU terms remain the $K$ unitary phased permutations $P_k$, each implemented by a controlled permutation, and the SELECT circuit therefore still scales linearly with $K$.  
Per-term weighted permutations are not used; the diagonal scaling is a separate block encoding that does not widen the primary LCU ancilla register beyond the $\lceil\log_2 K\rceil$ qubits needed for the permutation terms. Crucially, the property $\alpha=1$ applies strictly to the scaled matrix $S$, not to the original matrix $A$. When block-encoding $A = D_1^{-1} S D_2^{-1}$, the overall normalization constant becomes $\alpha_A = \alpha_S \cdot \alpha_D = \alpha_D$, where $\alpha_D$ is the normalization constant associated with the block encodings of the inverse diagonal matrices. If the original matrix $A$ requires aggressive Sinkhorn scaling (i.e., the diagonal matrices $D_1, D_2$ are ill-conditioned), $\alpha_D$ can be large. This additional normalization factor must be accounted for when calculating the true success probability and overall resource advantage for $A$. The matrix-completion approach avoids this diagonal overhead entirely by embedding $A$ directly into a larger doubly stochastic matrix, thereby preserving $\alpha=1$ for the entire block encoding at the cost of a larger system dimension.

In addition, a crucial property of this construction is that
the LCU normalization constant for the scaled matrix $S$ is $\alpha_S = \sum_k w_k = 1$.  In a
generic LCU, $\alpha$ can be large and its exact value depends on the
decomposition; the success probability then scales as
$1/\alpha^2$, often necessitating amplitude amplification.  Here,
because we obtain a convex combination for $S$, we know \emph{a priori} that
$\alpha_S=1$, and moreover we know an input state (an eigenstate) that saturates this bound.

Indeed, in the standard nonnegative setting (or when employing the matrix completion approach), the resulting strictly doubly stochastic matrix $S$ possesses the uniform superposition $\ket{+}^{\otimes n} = \frac{1}{\sqrt{N}}\sum_{i=1}^N \ket{i}$ as a natural eigenvector with eigenvalue $1$: $S \ket{+}^{\otimes n} = \ket{+}^{\otimes n}$. Consequently, for these classes of matrices, the ancilla post‑selection succeeds deterministically with probability
\begin{equation}
p_{\text{succ}} = \|S \ket{+}^{\otimes n}\|^2 = 1 .
\end{equation}
No amplitude amplification is required in this scenario. For general complex matrices processed via our phased scaling framework, entry-wise phase differences can cause destructive interference, meaning $\ket{+}^{\otimes n}$ ceases to be a strict eigenvector and $p_{\text{succ}}$ will depend on the input state's overlap with the principal subspace. Crucially, however, because the phased decomposition remains a strict convex combination, the structural LCU normalization constant is preserved exactly at $\alpha = 1$ across all cases, preventing the severe $\mathcal{O}(\sqrt{N})$ scaling penalties inherent to standard Pauli decompositions.

For an arbitrary input $\ket{\psi}$, the success probability of the block encoding of $S$ is given by $\|S\ket{\psi}\|^2$. While $\alpha_S=1$ removes the coefficient-sum penalty, it does not automatically guarantee a high success probability for a general input. For example, for the uniform doubly stochastic matrix $S=J/N$, the success probability is exactly zero for any input state orthogonal to the uniform superposition. In general, the success probability depends on the overlap of $\ket{\psi}$ with the principal singular subspace of $S$. For applications such as quantum walks or Markov chain simulations where the input is naturally the uniform state or closely aligned with the stationary distribution, the success probability remains high. However, for arbitrary inputs, amplitude amplification may still be required.
This situation stands in contrast to standard Pauli decompositions, where the coefficient sum $\alpha$ is not bounded by $1$ and frequently scales as $\mathcal{O}(\sqrt{N})$ or worse, demanding extensive rounds of amplitude amplification. The permutation‑based LCU framework therefore offers a distinct advantage: it drastically simplifies the physical hardware requirements by lowering the ancilla register width, while delivering deterministic unity success probability for uniform input states in the non-phased regime—a feature uniquely suited for quantum walks and quantum Markov chain simulations.

\section{Birkhoff--von Neumann Decomposition Variants}
\label{sec:bvn}

Let $S\in\mathbb{R}^{N\times N}$ be a doubly stochastic matrix, where $N = 2^n$ denotes the dimension corresponding to an $n$-qubit system.
Birkhoff's theorem guarantees the existence of a decomposition
$S = \sum_{k=1}^K w_k P_k$ with each $P_k$ an $N\times N$ permutation matrix, $w_k > 0$, and $\sum_{k=1}^K w_k = 1$.

The classic constructive proof of Birkhoff's theorem yields a simple iterative algorithm: find a perfect matching in the bipartite graph induced by the support of the residual matrix, subtract the smallest entry along that matching, and repeat.
This algorithm is given in Algorithm~\ref{alg:original} where a perfect matching can be found with the Hopcroft--Karp algorithm in $O(N^{2.5})$ time \cite{hopcroft1973n}. This algorithm converges in at most $K\le N^2 - 2N + 2$ steps (see Section~\ref{sec:complexity}), yielding an overall $O(N^{4.5})$ worst-case runtime.

\begin{algorithm}[ht]
\caption{Original Greedy BVN}
\label{alg:original}
\KwIn{$S\in\mathbb{R}^{N\times N}$ doubly stochastic, tolerance $\varepsilon>0$}
\KwOut{permutation list $\mathcal{P}=[P_1,\dots,P_K]$, weight list $\mathbf{w}=[w_1,\dots,w_K]$}
$R \leftarrow S$\;
$\mathcal{P} \leftarrow [\,]$, $\mathbf{w} \leftarrow [\,]$\;
$t \leftarrow 1$\;
\While{$\|R\|_1 > \varepsilon$}{
Construct bipartite graph $G=(V_r\cup V_c, E)$ where\\
$V_r=\{1,\dots,N\}$, $V_c=\{N+1,\dots,2N\}$, and\\
$(i,j+N)\in E$ iff $R_{ij}>0$\;
Find any perfect matching $\mathcal{M}$ in $G$\;
\If{no perfect matching exists}{break\;}
Form permutation matrix $P_t$ from $\mathcal{M}$:
for each $(i,j+N)\in\mathcal{M}$ set $(P_t)_{ij}=1$\;
$w_t \leftarrow \min\{R_{ij}\mid (i,j+N)\in\mathcal{M}\}$\;
$R \leftarrow R - w_t P_t$\;
Append $P_t$ to $\mathcal{P}$, $w_t$ to $\mathbf{w}$\;
$t \leftarrow t+1$\;
}
\Return $\mathcal{P}, \mathbf{w}$ \tcp*{Sum of weights is $1 - \|R\|_1 \ge 1 - \varepsilon$}
\end{algorithm}

\subsection{Largest-weight variant}
As stated in the introduction, weighted matching variants of Birkhoff decomposition similar to the ones described here appear in prior work on hybrid network scheduling and combinatorial optimization \cite{liu2015scheduling,bojja2018costly,valls2021birkhoff}. We note that those algorithms were optimized for classical cost metrics—reconfiguration delay and throughput under time windows—and were primarily motivated by hybrid circuit/packet switching in data centers. For instance, \textit{Solstice}~\cite{liu2015scheduling} uses a threshold-based greedy method to amortize reconfiguration delay and does not explicitly minimize the number of permutations; its theoretical sparsity is $O(\log(1/\varepsilon))$ but with an unspecified dependence on $N$. A more similar approach to the algorithm in this paper is \textit{Eclipse}~\cite{bojja2018costly}, which maximizes a different objective: for each candidate matching $M$ and duration $\alpha$ (chosen from the distinct entries of the residual matrix), it considers the ratio $\frac{\|\min(\alpha M, T_{\text{rem}})\|_1}{\alpha+\delta}$, where $\delta$ is the reconfiguration delay. \textit{Birkhoff+}~\cite{valls2021birkhoff} combines Birkhoff decomposition with Frank–Wolfe optimization and uses linear programming with a barrier for circuit switches in networking.

The quantum LCU setting introduces a fundamentally different objective than the one used in the mentioned works. Here, the number of terms $K$ directly determines the ancilla-qubit count ($\lceil\log_2 K\rceil$) in an LCU block encoding. Therefore, the main objective is the minimization of the number of permutation terms. We show that largest-weight greedy Birkhoff, already known empirically to produce $K = O(N)$ terms, is near-optimal for this purpose, achieving a quadratic reduction over standard Birkhoff and also over Pauli decompositions.

As shown in Algorithm~\ref{alg:largest_weight}, the original greedy algorithm chooses an \emph{arbitrary} perfect matching. To reduce the number of terms, we instead extract a \emph{maximum-weight} perfect matching at each step. The intuition is that a heavy matching captures a large fraction of the residual mass, leading to faster convergence.  Algorithm~\ref{alg:largest_weight} is conceptually very simple: at each step we find a maximum-weight perfect matching in the bipartite graph of the residual matrix (using the residual entries as edge weights), then subtract the minimum weight along that matching. These steps are illustrated in Fig.~\ref{fig:bvn-steps} for a small graph. Note that this heuristic has been used implicitly in some scheduling implementations, but its performance in terms of the number of permutations has not been analyzed, nor has it been applied to quantum block encoding. We provide the analysis of a closely related \emph{bottleneck} variant (Theorem~\ref{thm:bottleneck-bvn}) showing $O(N \log(N/\varepsilon))$ term count, and empirically demonstrate that the maximum-weight variant needs only $\approx 2N$ permutations for dense matrices – an order of magnitude quadratically better than standard Birkhoff and Pauli decompositions.

\begin{algorithm}[ht]
\caption{Largest-Weight BVN}
\label{alg:largest_weight}
\KwIn{$S\in\mathbb{R}^{N\times N}$ doubly stochastic, tolerance $\varepsilon>0$}
\KwOut{$\mathcal{P}$, $\mathbf{w}$ as in Algorithm~\ref{alg:original}}
$R \leftarrow S$\;
$\mathcal{P} \leftarrow [\,]$, $\mathbf{w} \leftarrow [\,]$\;
\While{true}{
Build weighted bipartite graph $G$ with edge weights $R_{ij}$ (only edges with $R_{ij}>0$)\;
\textbf{Find a maximum-weight perfect matching $\mathcal{M}$ in $G$}\;
\If{$\mathcal{M}$ is not a perfect matching}{break\;}
Form $P$ and compute $w = \min_{(i,j+N)\in\mathcal{M}} R_{ij}$\;
$R \leftarrow R - w\,P$\;
Append $P$, $w$\;
\If{error $\le\varepsilon$}{break\;}
}
\Return $\mathcal{P}, \mathbf{w}$ \tcp*{Sum of weights is $1 - \|R\|_1 \ge 1 - \varepsilon$}
\end{algorithm}

 The maximum-weight perfect matching can be computed via the Blossom algorithm \cite{edmonds1965paths}, Gabow's algorithm \cite{gabow1974implementation}, or the Hungarian algorithm in $O(N^3)$ time per iteration. The number of iterations is experimentally observed to be $O(N)$, giving a total runtime $O(N^4)$. There are also linear-time approximation algorithms \cite{duan2014linear} that could be used for very large matrices.

\textbf{Remark on Phased Matching.} When applying Algorithm~\ref{alg:largest_weight} to a complex phased matrix $S$, the weighted bipartite graph $G$ is constructed using the absolute values $|R_{ij}|$ as edge weights. When a matching $\mathcal{M}$ is selected, the corresponding phased permutation matrix $P$ is formed by setting $P_{ij} = e^{i \arg(R_{ij})}$ for all $(i,j) \in \mathcal{M}$. Because the phase of the matching term perfectly aligns with the residual matrix, subtracting $w P$ reduces the entry magnitudes smoothly without altering their underlying phase profiles.

\begin{figure*}[htbp]
\centering
\includegraphics[width=\linewidth]{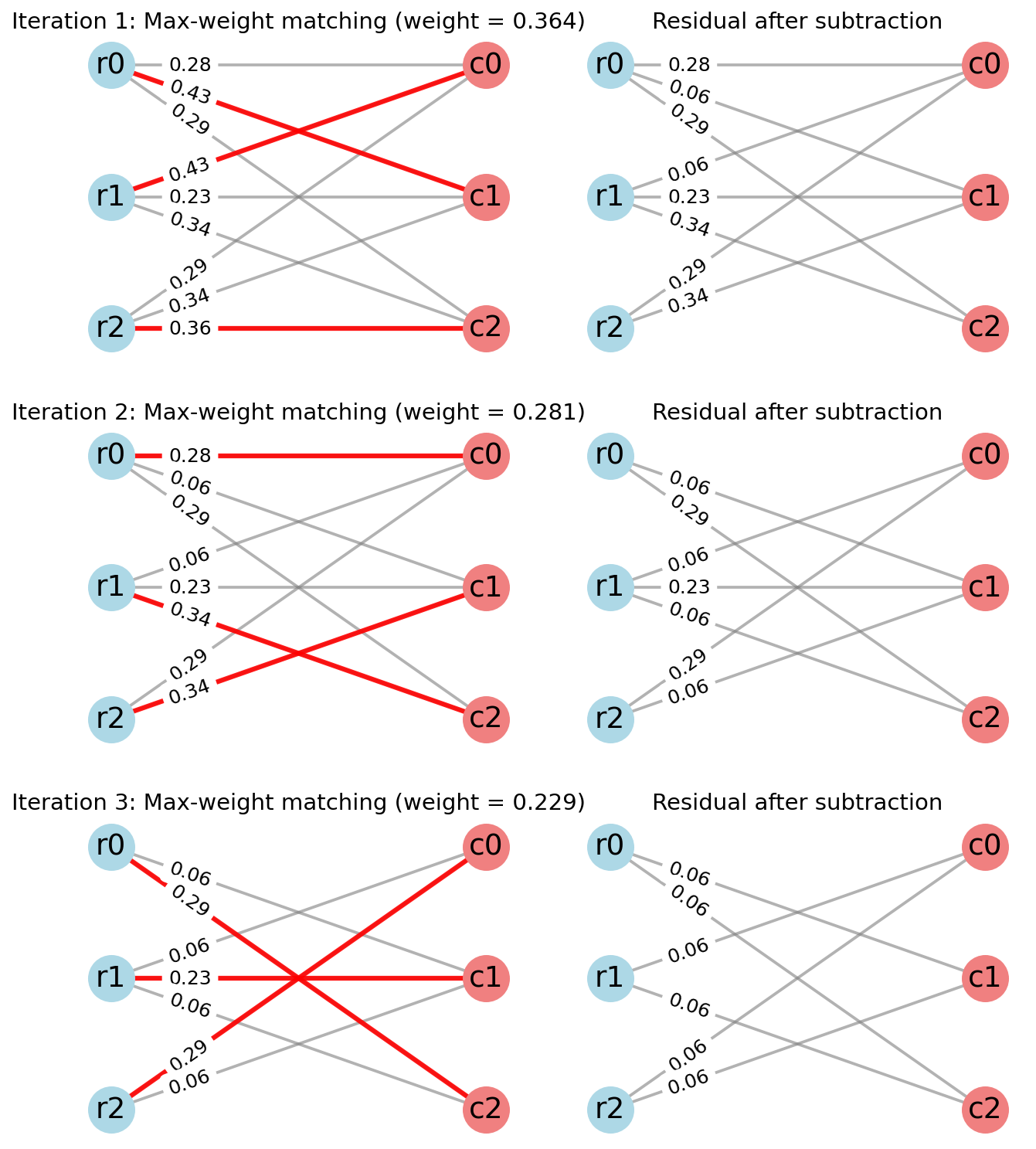}
\caption{Step‑by‑step execution of the largest‑weight BVN algorithm on a $3\times3$ doubly stochastic matrix. Left column: residual before subtraction with the chosen maximum‑weight perfect matching highlighted in red. Right column: residual after subtracting that matching.}
\label{fig:bvn-steps}
\end{figure*}

\subsection{Cut-off or threshold methods}
In Algorithm~\ref{alg:original}, before building the bipartite graph we can also remove all edges with $R_{ij} \le \theta$, where $\theta\ge0$ is a user-chosen threshold. This forces the algorithm to ignore very small entries and can drastically reduce $K$, at the cost of introducing an approximation error. The optimal $\theta$ (for a given error tolerance) can be found by binary search.

Or we can first compute the full original decomposition (Algorithm~\ref{alg:original} with $\varepsilon=0$). Sort the terms by weight and successively discard the smallest ones while the reconstruction error $\|S - \sum_{\text{kept}} w_k P_k\|_1$ stays below the given tolerance. Crucially, we \emph{do not} re-normalize the weights of the kept terms; this ensures the $\ell_1$ error is exactly equal to the sum of the discarded weights. This post-processing requires no extra matchings; its runtime is dominated by the initial $O(N^{4.5})$ decomposition.

\section{Theoretical Analysis}
\label{sec:complexity}

\subsection{Number of permutations}

Table~\ref{tab:termcounts} summarizes the term counts achieved by the
algorithms of Section~\ref{sec:bvn}.  Since the number of ancilla qubits
in an LCU block encoding is $\lceil\log_2 K\rceil$, the term count $K$ is
the decisive figure of merit for the quantum implementation.

\begin{table}[tb]
\centering
\caption{Worst‑case and typical number of permutations.
$N$ is the matrix size ($N=2^n$ for $n$ qubits).
The $O(N)$ scaling for the largest‑weight variant is conjectured
and supported by extensive numerical experiments.}
\label{tab:termcounts}
\resizebox{\columnwidth}{!}{%
\begin{tabular}{lcc}
\toprule
Method & $K$ (worst‑case) & $K$ (experiment) \\
\midrule
Original  & $N^2-2N+2$ \cite{brualdi1982notes} & $\Theta(N^2)$ \\
Cut‑off / Threshold              & $\le$ original            & $\le$ original   \\
Largest‑weight                    & $\mathcal{O}(N)^*$        & $\approx 2\,N$ \\
Bottleneck (theoretical)          & $N\ln(N/\varepsilon)$ (Thm.~\ref{thm:bottleneck-bvn}) & $N\ln(N/\varepsilon)$ \\
Pauli (generic) & $N^2$ & $N^2$ \\
\multicolumn{3}{l}{\footnotesize For symmetric matrices, the observed Pauli count is $N(N+1)/2$.}\\
\bottomrule
\end{tabular}
}
\end{table}

\textbf{Original BVN.}
The worst‑case number of permutations is at most $N^2-2N+2$, and this
bound is tight~\cite{brualdi1982notes,brualdi2006combinatorial}.
Finding the exact minimum is NP‑hard~\cite{dufosse2018further}, and
approximation schemes based on linear programming have been studied
\cite{kulkarni2017minimum,valls2021birkhoff}.
For a generic dense matrix the standard greedy algorithm produces
$\Theta(N^2)$ terms.

\textbf{Cut‑off and threshold variants.}
Both methods reduce the term count relative to the original decomposition
by discarding or ignoring small entries.  The cut‑off approach simply
prunes the smallest terms after a full decomposition, while the threshold
method avoids extracting very light matchings in the first place.
In our dense random tests with error tolerance $0.01$, the cut-off
method discarded approximately $20$--$25\%$ of the original terms for
$N\ge 16$; the threshold variant may produce larger reductions on
sparser matrices, but the worst-case bound remains $O(N^2)$.

\textbf{Largest‑weight BVN.}
For exact decomposition ($\varepsilon=0$) no worst‑case bound better
than the trivial $O(N^2)$ is known for the largest‑weight variant.
However, as we prove in the next subsection, a closely related
\emph{bottleneck} matching strategy already achieves
$O(N\log(N/\varepsilon))$ terms for an $\varepsilon$‑approximation.
In numerical experiments on dense doubly stochastic matrices
(Fig.~\ref{fig:experimental}) the largest‑weight variant
consistently reaches slightly higher than $K \approx 2\,N$, nearly independent of
$\varepsilon$ for moderate tolerances: 

The random doubly stochastic matrices were generated by drawing each entry uniformly from $[0.1,1]$, symmetrizing the matrix, and then applying Sinkhorn normalization until the row and column sums converged to $1$.  
For every dimension $N=2^n$, Fig.~\ref{fig:experimental} reports results from 50 independent random instances.  
Each marker is the mean over the 50 trials; error bars denote one standard deviation, and the individual trial points are shown as small semi-transparent markers with horizontal jitter solely to reduce overplotting.  

\begin{figure}[htbp]
\centering
 \includegraphics[width=1\linewidth]{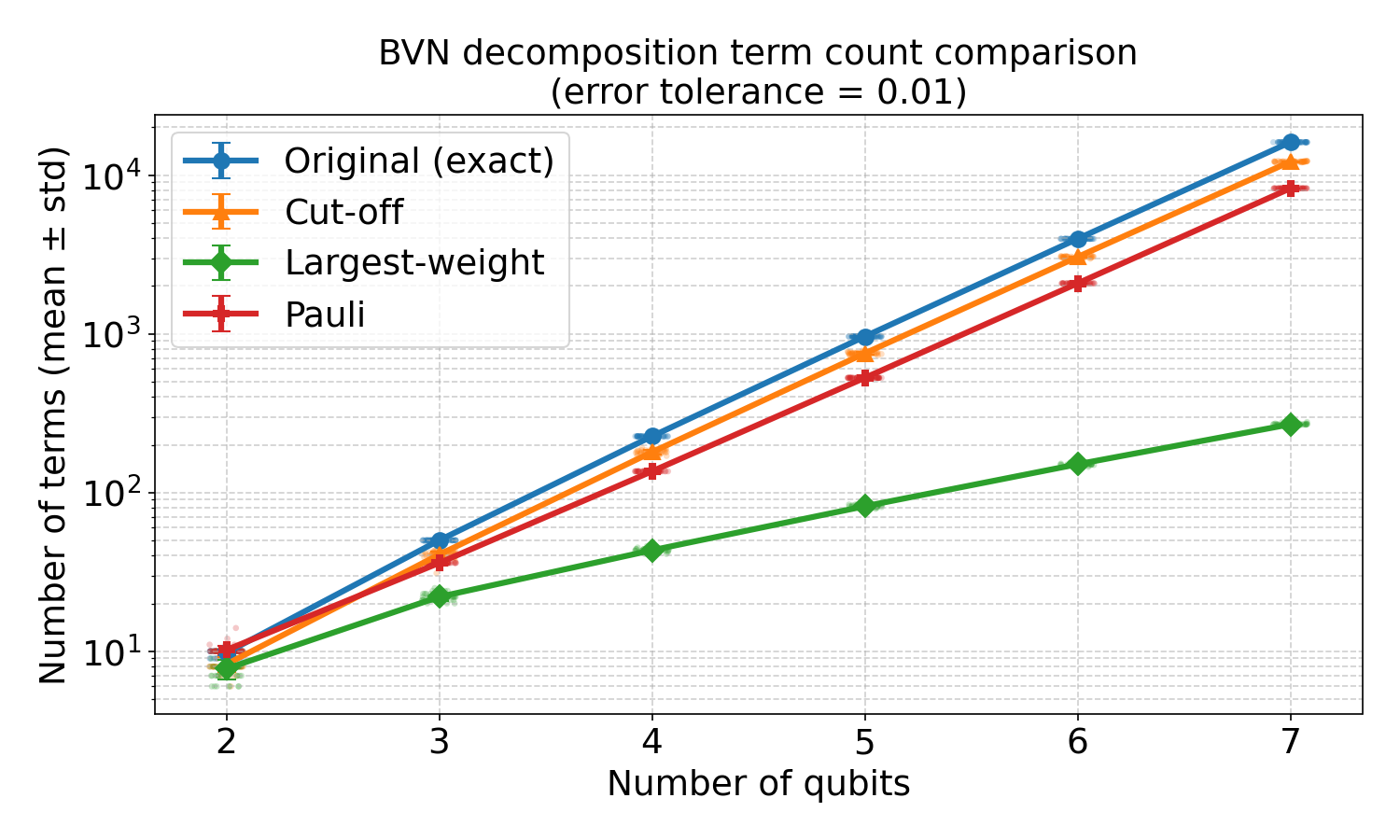}
\caption{{Empirical term counts for error tolerance $0.01$ and $N=2^n$ where $n$ represents number of qubits. Each marker is the mean over 50 random doubly stochastic matrices; error bars denote one standard deviation.  Small semi-transparent points show the individual 50 trials; the points are horizontally
jittered only to reduce over-plotting and do not affect the reported means or standard deviations (since standard deviations for Original and Pauli are almost 0 and for Largest-weight around 2-3, each trial value is concentrated around the mean and mostly not visible.).  The largest-weight variant scales as $O(N)$, while all other BVN methods and the Pauli decomposition
scale as $O(N^2)$.}}
\label{fig:experimental}
\end{figure}

As shown in the figure, the largest-weight term counts in our experiments remained around $\approx 2N$ for each trial, with standard deviations of only a few terms even as $N$ grows: The exact observed mean term counts of the largest-weight variant for $N=4,8,16,32,64,128$ are approximately $7.8$, $22.0$, $43.3$,
$81.9$, $150.7$, and $269$, respectively, with standard deviations $1.1$, $1.2$, $1.2$, $1.3$, $2.7$, and
$2.9$.  The resulting per-dimension scaling factors $K/N$ are approximately $1.95$, $2.75$, $2.71$, $2.56$, $2.35$, and $2.1$, giving an overall average scaling factor of approximately $2.4$.

We should also note that the term counts were essentially deterministic for the original BVN and for the Pauli decomposition with $N\ge 8$. In case of the Pauli decomposition, the observed term count was equal to $N(N+1)/2$which is different from the value given in Table~\ref{tab:termcounts}. This is expected because the random matrices are symmetric and  $N(N+1)/2$ is the number of independent symmetric real Pauli coefficients for a symmetric matrix.

The term count depends on the error tolerance: for precision $0.1$ the number of terms is around $10$, while for precision $0.0001$ it rises to approximately $150$; Fig.~\ref{fig:experimental}
corresponds to precision $0.01$. The dependence on precision is shown in Fig.~\ref{fig:precision},
where each point is also averaged over 50 trials.
\begin{figure}[t]
\centering
\begin{subfigure}[b]{0.48\textwidth}
     \includegraphics[width=\linewidth]{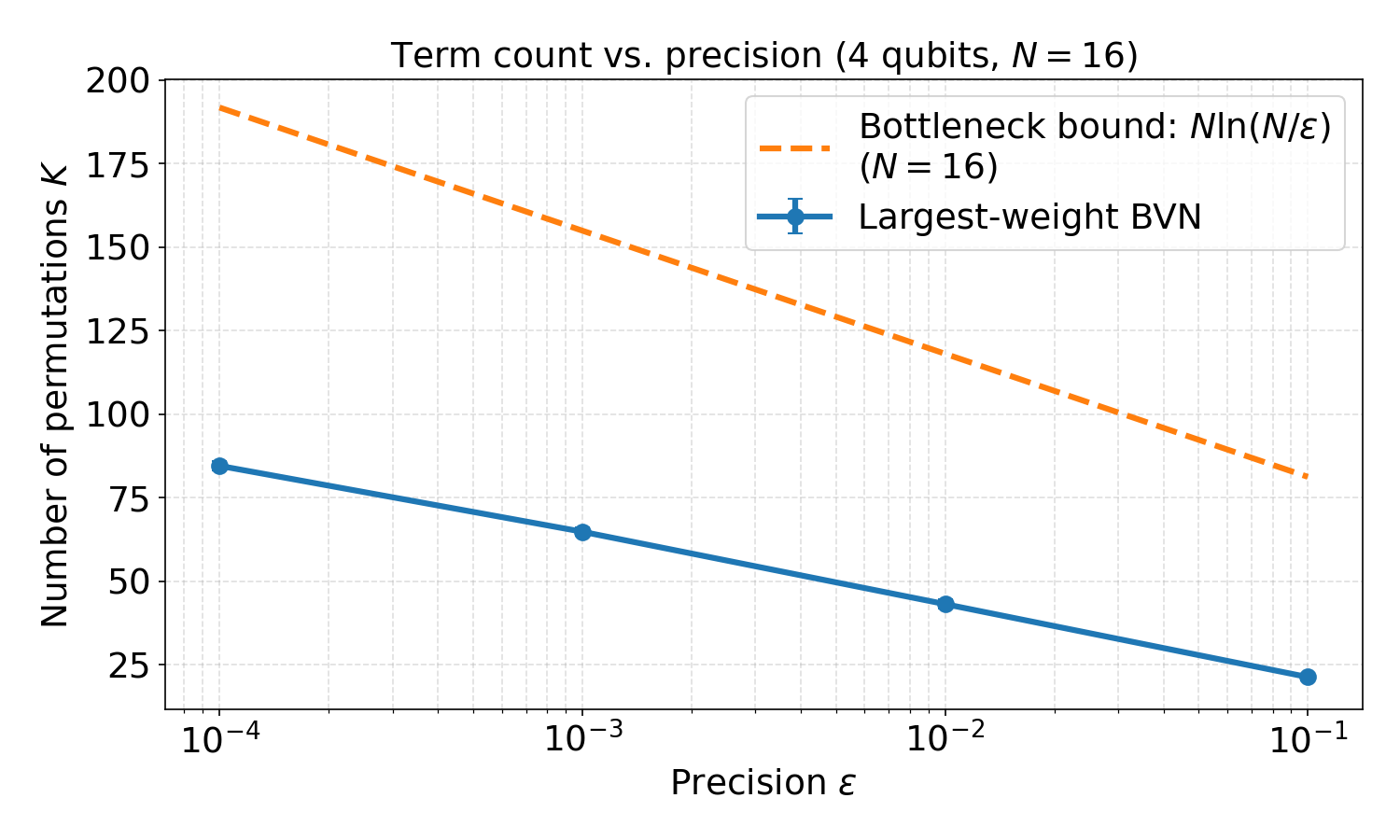}
\caption{Mean number of terms vs precision for 50 random $16\times16$ matrices.}
\end{subfigure}
\hfill 
\begin{subfigure}[b]{0.48\textwidth}
     \includegraphics[width=\linewidth]{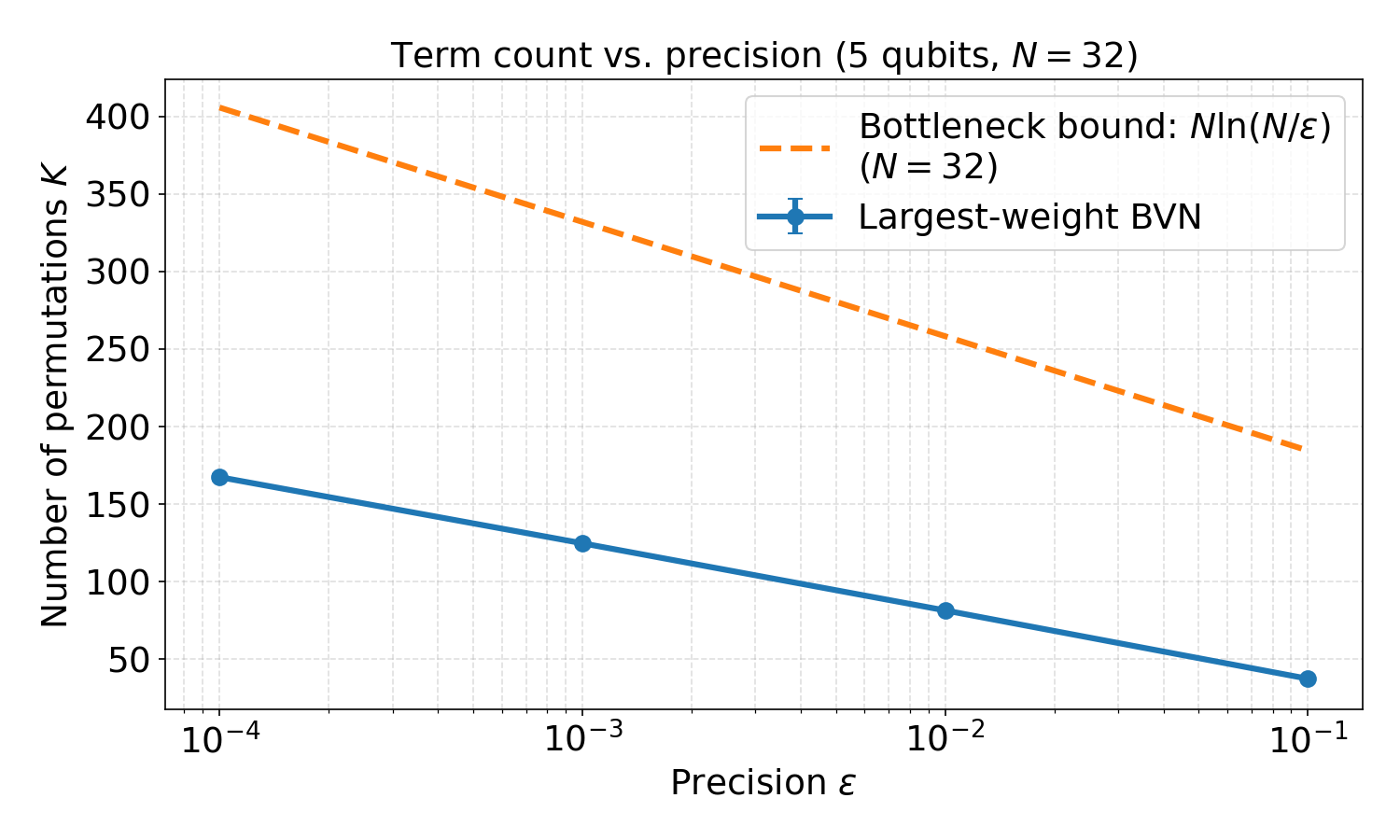}
\caption{Mean number of terms vs precision for 50 random $32\times32$ matrices.}
\end{subfigure}
\caption{Mean number of terms vs precision for 50 random matrices.}
\label{fig:precision}
\end{figure}
The small variance across trials at fixed $N$ is expected and is not an artifact of the random-matrix generator: for dense doubly stochastic matrices, the term count of the largest-weight greedy algorithm is a self-averaging quantity. After Sinkhorn normalization, different random instances share nearly identical residual mass profiles, so the maximum-weight matching extracts almost the same fraction of mass in every trial. As a result, the standard deviation of $K$ is typically only a few percent of the mean, and the 95\% bootstrap confidence intervals are narrower than the plotted marker size. This statistical concentration further supports the robustness of the reported $O(N)$ scaling.

In conclusion, this algorithm is a variant that realizes a full quadratic reduction in the term count relative to the original Birkhoff method, and we conjecture that an $O(N)$ bound
holds for almost all matrices in the relative interior of the Birkhoff polytope.

\textbf{Bottleneck variant guarantee.}
The bottleneck (maximin) perfect matching algorithm provides a rigorous upper bound of $O(N\log(N/\varepsilon))$ terms.  The following subsection presents the formal statement and proof.  While the bottleneck variant itself is not the most economical in practice, it demonstrates that matching‑based decompositions can, in principle, achieve near‑linear term counts---making the largest‑weight heuristic a natural and empirically superior choice.

\subsection{Theoretical guarantee for the bottleneck‑matching variant}
\label{sec:bottleneck-bound}
To gain insight into why the largest‑weight heuristic performs so well, we analyze a related algorithm that extracts a \emph{bottleneck} perfect matching at each step---i.e., a matching that maximizes the minimum edge weight.  
For doubly stochastic matrices the following classical result gives a lower bound on that minimum.

\begin{lemma}[Marcus and Ree~\cite{marcus1959diagonals}, see also \cite{KUSHWAHA2025223}]
\label{lem:heavy-permutation}
Let $R$ be a non‑negative $N\times N$ matrix whose row and column sums are all equal to $r>0$.  
Then there exists a permutation $\sigma$ such that $R_{i,\sigma(i)} \ge r/N$ for every $i=1,\dots,N$.  
Consequently, the bottleneck (maximin) perfect matching of the bipartite graph with edge weights $R_{ij}$ has value at least $r/N$.
\end{lemma}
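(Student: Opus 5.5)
Normalize first: if $R$ has all row and column sums equal to $r>0$, then $R/r$ is doubly stochastic, and the claim is equivalent to finding a permutation $\sigma$ with $(R/r)_{i,\sigma(i)}\ge 1/N$ for all $i$. So I will assume $r=1$. Consider the thresholded bipartite graph $G_{1/N}$ on rows and columns, with an edge $(i,j)$ whenever $R_{ij}\ge 1/N$. The goal is to show that $G_{1/N}$ has a perfect matching; any such matching is the desired $\sigma$. The bottleneck statement then follows at once, since the maximin value is at least the minimum edge weight of this particular matching.

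I will use Hall's theorem, or equivalently König's theorem. Suppose no perfect matching exists. By König, the edges of $G_{1/N}$ can be covered by $p$ rows and $q$ columns with $p+q\le N-1$. Every entry outside these rows and columns is then $<1/N$; call this set of entries the uncovered block, which has size $(N-p)\times(N-q)$.

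Next I compute the mass of the uncovered block in two ways. Its mass equals the total of the $N-p$ uncovered rows, namely $N-p$, minus the mass those rows place in the $q$ covered columns, which is at most $q$. So the uncovered block has mass at least $N-p-q\ge 1$. On the other hand, each of its entries is $<1/N$, so its mass is $<(N-p)(N-q)/N$. This gives
\[
N(N-p-q) < (N-p)(N-q), \qquad\text{i.e.}\qquad pq > 0\cdot N - \ldots,
\]
and more precisely $N^2 - Np - Nq < N^2 - Np - Nq + pq$, which reduces to $pq>0$. That is not yet a contradiction, so this crude bound is too weak. This is the main obstacle.

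To fix it, I would switch to the sharper classical route used by Marcus and Ree. Write $R=\sum_k w_k\Pi_k$ by Birkhoff's theorem and choose the permutation $\sigma$ that maximizes the diagonal sum $\sum_i R_{i\sigma(i)}$. Then argue by contradiction: if some $R_{i_0\sigma(i_0)}<1/N$, an exchange argument should produce a permutation with strictly larger diagonal sum. Concretely, average over the $N$ permutations $\sigma\circ\tau^m$, where $\tau$ is the cyclic shift, or over transpositions composed with $\sigma$, and use that rows and columns sum to $1$.

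If the exchange argument gets messy, the fallback is to cite Lemma~\ref{lem:heavy-permutation} itself as the classical result of~\cite{marcus1959diagonals}. In that case I would only supply the rescaling by $r$ and the deduction of the bottleneck bound, which are the parts the paper actually uses.
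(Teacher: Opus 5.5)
The rescaling to $r=1$ and the deduction of the bottleneck clause from the existence of $\sigma$ are fine. The existence claim itself, however, cannot be proved, because it is false for every $N\ge 3$. Your K\"onig computation was already showing this. The leftover condition $pq>0$ is not an artifact of a crude estimate; it is exactly where counterexamples live. Take $N=3$ and $p=q=1$:
\[
R=\begin{pmatrix} 1/4 & 1/4 & 1/2\\ 1/4 & 1/4 & 1/2\\ 1/2 & 1/2 & 0 \end{pmatrix}.
\]
This matrix is doubly stochastic, and its entries $\ge 1/3$ are covered by row $3$ and column $3$. Rows $1$ and $2$ both have column $3$ as their only entry $\ge 1/3$, so no permutation has all entries $\ge 1/3$. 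In fact every permutation uses an entry $\le 1/4$, so the bottleneck value is $1/4<r/N$ and the ``consequently'' clause fails as well. The same pattern works for all $N\ge 3$, including $N=2^n$. Take an $s\times t$ block of entries $1/(st)$ with $s+t=N+1$ and $s,t\ge 2$, and complete it by an $s\times(s-1)$ block of $1/s$, a $(t-1)\times t$ block of $1/t$, and a zero block.

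The exchange argument cannot repair this. Maximizing $\sum_i R_{i\sigma(i)}$ is the route to the actual Marcus--Ree inequality $\max_\sigma\sum_i R_{i\sigma(i)}\ge\sum_{i,j}R_{ij}^2$ (for $r=1$). That inequality controls the \emph{sum} of a diagonal, giving an average entry $\ge 1/N$, not its minimum. In the example, all four diagonals of maximal sum $5/4=\sum_{i,j}R_{ij}^2$ contain an entry $1/4$. Your fallback of citing~\cite{marcus1959diagonals} for the $\ge r/N$ statement is exactly what the paper's one-line sketch does, and it attributes to Marcus and Ree a statement that does not hold.

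What your K\"onig argument does prove, once you replace $1/N$ by a general threshold $\theta$, is the sharp version. Shrink the uncovered block so that $s+t=N+1$. It then has mass $\ge 1$, and since all its entries are $<\theta$, its mass is $<st\,\theta\le\lfloor (N+1)^2/4\rfloor\,\theta$. Hence, undoing the normalization, there is always a $\sigma$ with $R_{i\sigma(i)}\ge r/\lfloor (N+1)^2/4\rfloor$ for all $i$. The block examples above, with $s,t$ as equal as possible, attain this bound. So the guaranteed bottleneck value is only $\Theta(r/N^2)$, and the argument of Theorem~\ref{thm:bottleneck-bvn} then yields only $O(N^2\log(N/\varepsilon))$ terms rather than $N\ln(N/\varepsilon)$.
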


\begin{proof}[Proof sketch]
The matrix $\frac{1}{r}R$ is doubly stochastic.  
The statement is a direct consequence of the classic theorem of Marcus and Ree~\cite{marcus1959diagonals}, which guarantees a permutation with all entries $\ge 1/N$ in any doubly stochastic matrix.
\end{proof}

Lemma~\ref{lem:heavy-permutation} immediately yields a complexity bound for the bottleneck variant of the Birkhoff--von~Neumann algorithm.

\begin{theorem}[Term count of bottleneck BVN]
\label{thm:bottleneck-bvn}
Let $S$ be an $N\times N$ doubly stochastic matrix and let $0<\varepsilon<1$.  The algorithm that repeatedly extracts a bottleneck perfect matching (maximin matching) and subtracts its minimum weight produces an $\varepsilon$‑approximation (in $\ell_1$ norm) after at most
\begin{equation}
K \;\le\; N\,\ln\frac{N}{\varepsilon}
\end{equation}
steps.
\end{theorem}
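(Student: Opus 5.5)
The plan is to track the residual mass $r_t$, defined as the common row and column sum of the residual $R_t$ after $t$ extraction steps, so that $r_0 = 1$. We will show that $r_t$ shrinks by a factor of at least $(1-1/N)$ per step.

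The first observation is an invariant. Subtracting $w P$ from a matrix with constant row and column sums $r$, where $P$ is a permutation matrix, leaves a matrix whose row and column sums all equal $r-w$. The subtraction also keeps every entry nonnegative, because $w$ is the minimum entry of $R$ along the matching. Hence every residual $R_t$ is nonnegative with uniform row and column sums $r_t$, and its $\ell_1$ norm (sum of absolute entries) is $\|R_t\|_1 = N r_t$. This is exactly the reconstruction error $\|S-\sum_{k\le t} w_k P_k\|_1$.

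Next, Lemma~\ref{lem:heavy-permutation} applies at every step with $r = r_t > 0$. It guarantees that the bottleneck matching has minimum edge weight at least $r_t/N$. Since the algorithm subtracts exactly this bottleneck value, $w_{t+1}\ge r_t/N$, and so
\begin{equation}
r_{t+1} = r_t - w_{t+1} \le r_t\Bigl(1-\tfrac1N\Bigr).
\end{equation}
Iterating gives $r_t \le (1-1/N)^t \le e^{-t/N}$. The error is therefore $\|R_t\|_1 = N r_t \le N e^{-t/N}$, which is at most $\varepsilon$ as soon as $t \ge N\ln(N/\varepsilon)$. This yields $K \le \lceil N\ln(N/\varepsilon)\rceil$; the statement's $K\le N\ln(N/\varepsilon)$ should be read up to this rounding. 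One should also note that if $r_t$ reaches $0$ the algorithm has already terminated exactly, so the bound holds trivially.

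The only point that needs care is the choice of error measure. If the error is normalized differently, for example as $r_t = 1-\sum_k w_k$ (the form used in the algorithms' return comments), the bound improves to $N\ln(1/\varepsilon)$. The stated bound with $N/\varepsilon$ corresponds to the entrywise $\ell_1$ norm, which is the weaker and safer reading, so I would adopt it. Beyond this, the argument is a direct geometric-decay calculation, and I expect no real obstacle once the invariance of uniform row and column sums is recorded.
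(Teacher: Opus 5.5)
Your proposal is correct and follows the same route as the paper. It has the same steps: the uniform row/column sum is preserved, the Marcus--Ree lemma gives $w_{t+1}\ge r_t/N$ and hence $r_t\le(1-1/N)^t$, and the identity $\|R_t\|_1 = N r_t$ converts this to the entrywise $\ell_1$ error. Your two extra remarks are accurate refinements that the paper's proof glosses over. First, nonnegativity of the residual is preserved at each step. Second, the bound should be read as $\lceil N\ln(N/\varepsilon)\rceil$; the literal inequality fails, for instance, at $N=1$, $\varepsilon=1/2$, where one step is needed but $N\ln(N/\varepsilon)=\ln 2<1$.
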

\begin{proof}
Let $r^{(t)}$ be the common row / column sum of the residual matrix after $t$ steps; initially $r^{(0)}\!=\!1$.  At step $t$ the bottleneck matching finds a perfect matching whose minimum weight is $w_t$.  By Lemma~\ref{lem:heavy-permutation} we have $w_t \ge r^{(t)}/N$.  The updated row sum therefore satisfies
\begin{equation}
r^{(t+1)} = r^{(t)} - w_t \;\le\; r^{(t)}\Bigl(1-\frac{1}{N}\Bigr).
\end{equation}
Iterating gives $r^{(K)} \le (1-1/N)^K$. Since $\ln(1-1/N) \le -1/N$, we need $K \ge N\,\ln(N/\varepsilon)$ to ensure $r^{(K)}\le\varepsilon/N$.  The $\ell_1$ norm of the residual equals $N\,r^{(K)}$, so the error is at most $N(\varepsilon/N) = \varepsilon$ after this many terms.
\end{proof}

Theorem~\ref{thm:bottleneck-bvn} shows that a bottleneck‑based decomposition already reduces the term count to $O(N\log(N/\varepsilon))$,in stark contrast to the $O(N^2)$ worst case of the original method.
In our experiments the \emph{largest‑weight} variant(Algorithm~\ref{alg:largest_weight}) consistently achieves even fewer terms---empirically $K \approx 2N$, nearly independent of $\varepsilon$ for moderate tolerances---because choosing a maximum‑weight matching removes a larger fraction of the residual mass at each step.  A tight theoretical analysis of the largest‑weight variant remains an open problem, but the practical gain is substantial (Fig.~\ref{fig:experimental}).

\subsection{Classical runtime of the decomposition algorithms}
\label{sec:classical-runtime}

The preprocessing time on a classical computer, while secondary for
quantum applications, determines the feasibility of the overall workflow.
All algorithms build and process a bipartite graph with $N$ vertices on
each side and up to $N^2$ edges.

\begin{itemize}
\item \textbf{Original (any perfect matching).}  A single perfect matching
  via Hopcroft–Karp costs $O(N^{2.5})$.  With $O(N^2)$ iterations the
  total time is $O(N^{4.5})$.
\item \textbf{Cut‑off and threshold.}  The same matching routine is used;
  the cut‑off approach incurs an additional $O(N^3)$ pruning step.
  Worst‑case time remains $O(N^{4.5})$.
\item \textbf{Largest‑weight and bottleneck.}  Both require a\emph{maximum‑weight perfect matching} per iteration.  In our Python implementation we employ NetworkX’s \texttt{max\_weight\_matching}, which is based on Edmonds’s blossom algorithm~\cite{edmonds1965paths} and runs in $O(N^3)$ per call. 
With $K = O(N)$ iterations this gives $O(N^4)$ total time.Faster maximum‑weight matching algorithms exist: 
the Gabow–Tarjan algorithm~\cite{gabow1974implementation} also achieves $O(N^3)$, and Duan and Pettie~\cite{duan2014linear} give a linear‑time$(1-\delta)$‑approximation for any constant $\delta>0$.  
The latter would reduce the per‑iteration cost to $O(N^2)$ and the total to$O(N^3)$, with negligible impact on the final permutation count.
\end{itemize}

The constant factors hidden in the asymptotic notation are small enough that matrices with $N \le 10^4$ can be decomposed on a standard laptop in seconds to minutes.  For larger sizes, the approximate linear‑time algorithms provide a practical path to scaling.

\section{Discussion}
\label{sec:discussion}

\subsection{Quantum resource implications of fewer terms}
The primary contribution of this work is a demonstration that the largest‑weight greedy 
Birkhoff--von~Neumann decomposition can reduce the number of permutation terms from $\Theta(N^2)$ (for standard Birkhoff or Pauli decompositions) to $\approx 2N$ for dense doubly stochastic matrices of size $N = 2^n$.  This has several direct consequences for the block‑encoding of such matrices:

\begin{itemize}
\item \textbf{Ancilla qubits.} The ancilla register requires $\lceil\log_2 K\rceil$ qubits, so the reduction from $O(N^2)$ to $O(N)$ terms decreases the ancilla width from $\sim 2\log_2 N$ to $\sim \log_2 N$ --- a quadratic improvement in the number of ancillas. For near‑term devices with limited qubit counts, this can be the difference between feasibility and infeasibility. 

\item \textbf{Success probability.}  Because the decomposition is a convex combination of permutation matrices, the LCU normalization constant is exactly $\alpha = 1$.  The success probability of the ancilla post‑selection is therefore $\|S\ket{\psi}\|^2$, with no hidden scaling penalty.  Crucially, the uniform superposition$\ket{+}^{\otimes n}$ is a known eigenvector of $S$ with eigenvalue $1$,yielding \emph{deterministic} success ($p_{\text{succ}} = 1$) when the block encoding is applied to this input.  For other inputs the success probability remains $O(1)$ as long as the state overlaps significantly with the principal subspace.  This stands in sharp contrast to a Pauli decomposition, where $\alpha$ is often $\Theta(\sqrt{N})$ or larger and its exact value is unknown without explicit calculation, forcing the use of amplitude amplification. 
{In fixed‑Hadamard LCU architectures \cite{daskin2026exploiting},
where---for uniformly weighted LCU coefficients---the success
probability scales as $1/K$, our reduction in $K$ further amplifies
this advantage.  In contrast, our convex decomposition has
$\alpha=1$ exactly, so the block encoding does not rely on this
uniform-coefficient assumption and can achieve deterministic success
on the uniform eigenstate.}

\item \textbf{Coefficient preparation and SELECT complexity.}  The
classical preprocessing of the $K$ coefficients and the quantum SELECT
oracle (multiplexed controlled‑permutation) both scale linearly with
$K$.  Hence shrinking $K$ by a factor $\sim N$ directly reduces the
circuit depth and the number of controlled gates, lowering the overall
error susceptibility on noisy hardware.
\end{itemize}

\subsection{Trade‑off between term count and CNOT gate count}

It is important to note that a smaller $K$ does \emph{not} automatically guarantee a lower CNOT count.  While a Pauli term is implemented by a single multi‑controlled Pauli gate (or a few single‑qubit rotations), 
{a single $n$‑qubit permutation may require up to $\Theta(n\,2^n)$ elementary
gates in the worst case~\cite{shende2003synthesis,saeedi2013synthesis}.}  For instance,
implementing a general $n$‑qubit permutation using only Toffoli and CNOT
gates can cost $\mathcal{O}(n\,2^n)$ gates~\cite{shende2003synthesis,saeedi2013synthesis}.
Thus, even with a linear number of terms, the overall CNOT count might
still scale as $\Theta(N^2 \log N)$ in the worst case, comparable to the
Pauli approach.

However, several factors mitigate this concern in practice:
(i) Many permutations are relatively
simple to implement on quantum computers: e.g., they correspond to permutations of computational‑basis
states that can be implemented by a small circuit;
(ii) the bound of $O(n\,2^n)$ is worst case \cite{saeedi2013synthesis} and rarely reached for structured permutations;
(iii) the dominant cost in LCU is often the SELECT operation rather than the controlled unitaries themselves, and a smaller $K$ directly reduces the depth of the SELECT oracle.
Ultimately, the choice between permutation‑based and Pauli‑based block‑encoding depends on the target hardware and the specific matrix; our largest‑weight BVN provides a new option with a qualitatively different resource profile.

\textbf{Quantitative crossover estimate.}
To make the trade-off explicit, let $N=2^n$ and model the number of
terms as $K_{\rm Pauli}=N^2$ and $K_{\rm perm}\approx 2.4N$.
A single controlled Pauli term in an LCU SELECT circuit can be
implemented with $O(n)$ CNOTs; we take the order-of-magnitude
expression $a\,n$ with $a=2$.  A general $n$-qubit permutation
requires at most $O(nN)$ CNOTs~\cite{shende2003synthesis}; we model
this as $b\,nN$ with $b=3$.  For structured permutations arising from
local permutations, bit-reversals, or short cycles, a controlled
permutation can often be implemented with $O(n^2)$ CNOTs, which we
model as $c\,n^2$ with $c=1$.

Table~\ref{tab:cnot-crossover} shows the resulting order-of-magnitude
total SELECT CNOT counts for the bit-permutation component.  Under worst-case permutation synthesis, the
total CNOT scaling remains $O(N^2 n)$, and the term-count reduction is
offset by the per-permutation synthesis cost; in this regime the
primary advantage is ancilla width, not total CNOT count.  For
structured permutations, however, the total CNOT scales only as
$O(N n^2)$, yielding a large improvement already at moderate $N$.
The table is intended as an order-of-magnitude comparison, not an
exact gate count.

Crucially, for general complex matrices, each phased permutation matrix $P_k$ contains entry-dependent phases $e^{i\theta_{ij}}$. Implementing a controlled phased permutation requires not only the bit-permutation circuit but also a phase oracle that applies these $N$ phases. For arbitrary phases, this requires loading $N$ angles per term, typically costing $O(N)$ gates using QROM or amplitude loading. Consequently, for general complex matrices, the per-term cost increases by $O(N)$, leading to a total SELECT cost of $O(K N) \approx O(N^2)$. In this regime, the CNOT count advantage over Pauli decompositions may diminish, but the reduction in ancilla width from $2\log_2 N$ to $\log_2 N$ remains a significant hardware advantage. For matrices where the phases are structured or sparse, the phase synthesis cost can be substantially lower.

\begin{table}[hptb]
\centering
\footnotesize
\caption{Illustrative SELECT CNOT estimates for LCU block encodings of \emph{standard} (non-phased) doubly stochastic matrices.
$K$ is the number of terms; ``CNOT/term'' is an order-of-magnitude
per-term synthesis cost for the bit-permutation component; total CNOT is $K\times$ CNOT/term.
Constants: $a=2$, $b=3$, $c=1$. For $N=16,32,64,128$, $K$ is the measured mean largest-weight term count from 50 random symmetric doubly stochastic matrices; for $N=1024$, $K$ is extrapolated as $2.4N$. \emph{Note:} For general complex matrices, an additional $O(N)$ phase synthesis cost per term applies, which may offset the CNOT advantage but preserves the ancilla width reduction.}
\label{tab:cnot-crossover}
\resizebox{\columnwidth}{!}{%
\begin{tabular}{lllll}
\toprule
$N$ & Method & $K$ & CNOT/term & Total CNOT \\
\midrule
16  & Pauli                & 136    & $a n=8$          & $\approx 1.1\times10^3$ \\
16  & Perm. worst-case     & 43     & $b n N=192$      & $\approx 8.3\times10^3$ \\
16  & Perm. structured     & 43     & $c n^2=16$       & $\approx 6.9\times10^2$ \\
\midrule
32  & Pauli                & 528    & $a n=10$         & $\approx 5.3\times10^3$ \\
32  & Perm. worst-case     & 82     & $b n N=480$      & $\approx 3.9\times10^4$ \\
32  & Perm. structured     & 82     & $c n^2=25$       & $\approx 2.1\times10^3$ \\
\midrule
64  & Pauli                & 2080   & $a n=12$         & $\approx 2.5\times10^4$ \\
64  & Perm. worst-case     & 151    & $b n N=1152$     & $\approx 1.7\times10^5$ \\
64  & Perm. structured     & 151    & $c n^2=36$       & $\approx 5.4\times10^3$ \\
\midrule
128 & Pauli                & 8256   & $a n=14$         & $\approx 1.2\times10^5$ \\
128 & Perm. worst-case     & 269    & $b n N=2688$     & $\approx 7.2\times10^5$ \\
128 & Perm. structured     & 269    & $c n^2=49$       & $\approx 1.3\times10^4$ \\
\midrule
1024 & Pauli               & $524{,}800$ & $a n=20$    & $\approx 1.0\times10^7$ \\
1024 & Perm. worst-case    & 2458   & $b n N=30720$    & $\approx 7.6\times10^7$ \\
1024 & Perm. structured    & 2458   & $c n^2=100$      & $\approx 2.5\times10^5$ \\
\bottomrule
\end{tabular}
}
\end{table}

\subsection{Future directions} 
We believe several extensions of this work are worth pursuing: while our phased Birkhoff--von Neumann framework accommodates general complex entries by embedding phases directly into the permutation matrices, future work could investigate whether a joint optimization of phases and permutation weights could yield even sparser representations. Additionally, developing a specialized bottleneck matching heuristic that directly handles complex phases during the classical bipartite matching stage could provide tighter theoretical bounds on the term count for specific operator classes, such as non-Hermitian Hamiltonians or quantum channels.

The block‑encoding circuit ultimately depends on the quality of the
controlled‑permutation implementation.  Adapting the permutation
synthesis to the native gate set of a specific quantum computer (e.g.,
incorporating nearest‑neighbor constraints) may significantly reduce the CNOT overhead.

Finally, as stated before, the method is generic: {That means any complex square matrix whose absolute profile can be Sinkhorn‑scaled benefits directly from our generalized phased permutation‑based block‑encoding framework.} This opens up applications in non‑Hermitian simulation,
quantum reinforcement learning, and wherever a dense linear operator
needs to be embedded into a unitary.

\section{Conclusion}

We have shown that the largest‑weight greedy Birkhoff–von Neumann
decomposition, a simple heuristic from classical scheduling, is remarkably
effective for building block encodings of doubly stochastic matrices.
On dense instances it produces only $\approx 2N$ permutation terms,
compared to $\Theta(N^2)$ for standard Birkhoff or Pauli decompositions.
This quadratic reduction in the number of terms translates directly into
fewer ancilla qubits, shallower SELECT circuits, and, for some LCU
architectures, higher success probability.  While the classical
pre‑processing cost is $O(N^4)$, it remains manageable for
problem sizes where quantum advantage is expected, and approximate
linear‑time matching algorithms can lower this cost further.

The permutation‑based LCU approach thus opens a practical path toward
quantum algorithms for optimal transport, Sinkhorn‑scaled linear operators,
and other dense‑matrix applications.  We anticipate that the combination of {phased Sinkhorn scaling} and largest‑weight Birkhoff decomposition  will become
a standard tool in the quantum linear algebra toolbox.

\section*{Acknowledgments}
The author acknowledges the use of AI assistant tools during the preparation of this manuscript. 

\section*{Funding}
This project is not funded by any funding agency.

\section*{Data Availability}
The simulation code can be downloaded from GitHub at \url{https://github.com/adaskin/lcu-birkhoff} where we also saved the Jupyter-notebooks that generated figures.
\bibliographystyle{ieeetr}
\bibliography{main}

@article{berry_simulating_2015,
	title = {Simulating Hamiltonian Dynamics with a Truncated Taylor Series},
	volume = {114},
	rights = {http://link.aps.org/licenses/aps-default-license},
	issn = {0031-9007, 1079-7114},
	url = {https://link.aps.org/doi/10.1103/PhysRevLett.114.090502},
	doi = {10.1103/PhysRevLett.114.090502},
	pages = {090502},
	number = {9},
	journal = {Physical Review Letters},
	shortjournal = {Phys. Rev. Lett.},
	author = {Berry, Dominic W. and Childs, Andrew M. and Cleve, Richard and Kothari, Robin and Somma, Rolando D.},
	date = {2015-03-03}, 
    year="2015",
	langid = {english}
}

@article{sinkhorn1967concerning,
  title={Concerning nonnegative matrices and doubly stochastic matrices},
  author={Sinkhorn, Richard and Knopp, Paul},
  journal={Pacific Journal of Mathematics},
  volume={21},
  number={2},
  pages={343--348},
  year={1967},
  publisher={Mathematical Sciences Publishers}
}

@article{berry2007efficient,
  title={Efficient quantum algorithms for simulating sparse Hamiltonians},
  author={Berry, Dominic W and Ahokas, Graeme and Cleve, Richard and Sanders, Barry C},
  journal={Communications in Mathematical Physics},
  volume={270},
  number={2},
  pages={359--371},
  year={2007},
  publisher={Springer}
}

@article{pereira2014theory,
  title={The theory and applications of complex matrix scalings},
  author={Pereira, Rajesh and Boneng, Joanna},
  journal={Special Matrices},
  volume={2},
  number={1},
  pages={68--77},
  year={2014}
}

@article{hutchinson2023scaling,
  title={SCALING POSITIVE DEFINITE MATRICES TO ACHIEVE PRESCRIBED EIGENPAIRS},
  author={Hutchinson, George},
  journal={OPERATORS AND MATRICES},
  volume={17},
  number={4},
  pages={967--994},
  year={2023},
  publisher={ELEMENT R AUSTRIJE 11, 10000 ZAGREB, CROATIA}
}

@article{idel2016review,
  title={A review of matrix scaling and Sinkhorn's normal form for matrices and positive maps},
  author={Idel, Martin},
  journal={arXiv preprint arXiv:1609.06349},
  year={2016}
}

@article{edmonds1965paths,
  title={Paths, trees, and flowers},
  author={Edmonds, Jack},
  journal={Canadian Journal of mathematics},
  volume={17},
  pages={449--467},
  year={1965},
  publisher={Cambridge University Press}
}

@article{de2019birkhoff,
  title={The Birkhoff theorem for unitary matrices of prime-power dimension},
  author={De Vos, Alexis and De Baerdemacker, Stijn},
  journal={Linear Algebra and its Applications},
  volume={578},
  pages={27--52},
  year={2019},
  publisher={Elsevier}
}

@article{welch2014efficient,
  title={Efficient quantum circuits for diagonal unitaries without ancillas},
  author={Welch, Jonathan and Greenbaum, Daniel and Mostame, Sarah and Aspuru-Guzik, Alan},
  journal={New Journal of Physics},
  volume={16},
  number={3},
  pages={033040},
  year={2014},
  publisher={IOP Publishing}
}

@article{shende2003synthesis,
  title={Synthesis of reversible logic circuits},
  author={Shende, Vivek V and Prasad, Aditya K and Markov, Igor L and Hayes, John P},
  journal={IEEE Transactions on Computer-Aided Design of Integrated Circuits and Systems},
  volume={22},
  number={6},
  pages={710--722},
  year={2003},
  publisher={IEEE}
}

@article{saeedi2013synthesis,
  title={Synthesis and optimization of reversible circuits—a survey},
  author={Saeedi, Mehdi and Markov, Igor L},
  journal={ACM Computing Surveys (CSUR)},
  volume={45},
  number={2},
  pages={1--34},
  year={2013},
  publisher={ACM New York, NY, USA}
}

@article{daskin2026exploiting,
  title={Exploiting all ancilla outcomes in linear combinations of unitaries: low-rank recovery and quantum trapdoor functions},
  author={Daskin, Ammar},
  journal={arXiv preprint arXiv:2605.02986},
  year={2026}
}

@inproceedings{gilyen_quantum_2019,
	location = {Phoenix {AZ} {USA}},
	title = {Quantum singular value transformation and beyond: exponential improvements for quantum matrix arithmetics},
	isbn = {978-1-4503-6705-9},
	url = {https://dl.acm.org/doi/10.1145/3313276.3316366},
	doi = {10.1145/3313276.3316366},
	shorttitle = {Quantum singular value transformation and beyond},
	eventtitle = {{STOC} '19: 51st Annual {ACM} {SIGACT} Symposium on the Theory of Computing},
	pages = {193--204},
	booktitle = {Proceedings of the 51st Annual {ACM} {SIGACT} Symposium on Theory of Computing},
	publisher = {{ACM}},
	author = {Gilyén, András and Su, Yuan and Low, Guang Hao and Wiebe, Nathan},
	date = {2019-06-23},
    year="2019",
	langid = {english}
}

@article{martyn_grand_2021,
	title = {Grand Unification of Quantum Algorithms},
	volume = {2},
	issn = {2691-3399},
	url = {https://link.aps.org/doi/10.1103/PRXQuantum.2.040203},
	doi = {10.1103/PRXQuantum.2.040203},
	pages = {040203},
	number = {4},
	journal= {{PRX} Quantum},
	shortjournal = {{PRX} Quantum},
	author = {Martyn, John M. and Rossi, Zane M. and Tan, Andrew K. and Chuang, Isaac L.},
	date = {2021-12-03},
    year="2021",
	langid = {english}
}

@article{bharadwaj2025compact,
  title={Compact quantum algorithms for time-dependent differential equations},
  author={Bharadwaj, Sachin S and Sreenivasan, Katepalli R},
  journal={Physical Review Research},
  volume={7},
  number={2},
  pages={023262},
  year={2025},
  publisher={APS}
}

@article{gharat2026quantum,
  title={Quantum algorithm for solving differential equations using SLAC derivatives},
  author={Gharat, Rakshit M and Muraleedharan, Gopikrishnan and Berry, Dominic W and Brennen, Gavin K},
  journal={arXiv preprint arXiv:2605.04861},
  year={2026}
}

@article{xin2020quantum,
  title={Quantum algorithm for solving linear differential equations: Theory and experiment},
  author={Xin, Tao and Wei, Shijie and Cui, Jianlian and Xiao, Junxiang and Arrazola, I{\~n}igo and Lamata, Lucas and Kong, Xiangyu and Lu, Dawei and Solano, Enrique and Long, Guilu},
  journal={Physical Review A},
  volume={101},
  number={3},
  pages={032307},
  year={2020},
  publisher={APS}
}

@article{daskin2024quantum,
  title={A quantum compiler design method by using linear combinations of permutations},
  author={Daskin, Ammar},
  journal={arXiv preprint arXiv:2404.18226},
  year={2024}
}

@article{heredge2025nonunitary,
  title={Nonunitary quantum machine learning},
  author={Heredge, Jamie and West, Maxwell and Hollenberg, Lloyd and Sevior, Martin},
  journal={Physical Review Applied},
  volume={23},
  number={4},
  pages={044046},
  year={2025},
  publisher={APS}
}

@article{garcia2025nuclear,
  title={Nuclear physics in the era of quantum computing and quantum machine learning},
  author={Garc{\'\i}a-Ramos, Jos{\'e}-Enrique and S{\'a}iz, {\'A}lvaro and Arias, Jos{\'e} M and Lamata, Lucas and P{\'e}rez-Fern{\'a}ndez, Pedro},
  journal={Advanced Quantum Technologies},
  volume={8},
  number={12},
  pages={2300219},
  year={2025},
  publisher={Wiley Online Library}
}

@article{zhai2026quantum,
  title={Quantum Neural Physics: Solving Partial Differential Equations on Quantum Simulators using Quantum Convolutional Neural Networks},
  author={Zhai, Jucai and Abdullah, Muhammad and Chen, Boyang and Chaudry, Fazal and Smith, Paul N and Heaney, Claire E and Wang, Yanghua and Xiang, Jiansheng and Pain, Christopher C},
  journal={arXiv preprint arXiv:2603.24196},
  year={2026}
}

@article{gnanasekaran_efficient_2026,
	title = {Efficient quantum access model for sparse structured matrices using linear combination of “things”},
	volume = {113},
	issn = {2469-9926, 2469-9934},
	url = {https://link.aps.org/doi/10.1103/x4b2-9d94},
	doi = {10.1103/x4b2-9d94},
	pages = {022437},
	number = {2},
	journal = {Physical Review A},
	shortjournal = {Phys. Rev. A},
	author = {Gnanasekaran, Abeynaya and Surana, Amit},
	date = {2026-02-23},
    year="2026",
	langid = {english}
}

@article{camps_explicit_2024,
	title = {Explicit Quantum Circuits for Block Encodings of Certain Sparse Matrices},
	volume = {45},
	issn = {0895-4798, 1095-7162},
	url = {https://epubs.siam.org/doi/10.1137/22M1484298},
	doi = {10.1137/22M1484298},
	pages = {801--827},
	number = {1},
	journal = {{SIAM} Journal on Matrix Analysis and Applications},
	shortjournal = {{SIAM} J. Matrix Anal. Appl.},
	author = {Camps, Daan and Lin, Lin and Van Beeumen, Roel and Yang, Chao},
	date = {2024-03-31},
    year="2024",
	langid = {english}
}

@inproceedings{szegedy_quantum_2004,
	title = {Quantum speed-up of Markov chain based algorithms},
	url = {https://ieeexplore.ieee.org/abstract/document/1366222/},
	pages = {32--41},
	booktitle = {45th Annual {IEEE} symposium on foundations of computer science},
	publisher = {{IEEE}},
	author = {Szegedy, Mario},
	year = {2004},
}

@article{della2025efficient,
  title={Efficient LCU block encodings through Dicke states preparation},
  author={Della Chiara, Filippo and Nibbi, Martina and Shen, Yizhi and Van Beeumen, Roel},
  journal={arXiv preprint arXiv:2507.20887},
  year={2025}
}

@article{loaiza2025majorana,
  title={Majorana Tensor Decomposition: A unifying framework for decompositions of fermionic Hamiltonians to Linear Combination of Unitaries},
  author={Loaiza, Ignacio and Sankar Brahmachari, Aritra and Izmaylov, Artur F},
  journal={Quantum Science and Technology},
  volume={10},
  number={3},
  pages={035035},
  year={2025},
  publisher={IOP Publishing}
}

@article{orts_entropic_2026,
	title = {Entropic optimal transport with quantum amplitude estimation},
	url = {https://www.sciencedirect.com/science/article/pii/S0167739X26002049},
	pages = {108570},
	journal = {Future Generation Computer Systems},
	publisher = {Elsevier},
	author = {Orts, Francisco},
	date = {2026},
    year="2026",
}

@article{daskin_error_2025,
	title = {Error analysis of quantum operators written as a linear combination of permutations},
	volume = {24},
	issn = {1573-1332},
	url = {https://link.springer.com/10.1007/s11128-025-04771-0},
	doi = {10.1007/s11128-025-04771-0},
	pages = {149},
	number = {5},
	journal = {Quantum Information Processing},
	shortjournal = {Quantum Inf Process},
	author = {Daskin, Ammar},
	date = {2025-05-22},
    year="2025",
	langid = {english}
}

@article{lapworth_preconditioned_2025,
	title = {Preconditioned block encodings for quantum linear systems},
	volume = {10},
	url = {https://iopscience.iop.org/article/10.1088/2058-9565/ae0f4b/meta},
	pages = {045064},
	number = {4},
	journal = {Quantum Science and Technology},
	publisher = {{IOP} Publishing},
	author = {Lapworth, Leigh and Sünderhauf, Christoph},
	date = {2025},
    year="2025",
}

@article{childs2012hamiltonian,
  title={Hamiltonian simulation using linear combinations of unitary operations},
  author={Childs, Andrew M and Wiebe, Nathan},
  journal={Quantum Information \& Computation},
  volume={12},
  number={11-12},
  pages={901--924},
  year={2012},
  publisher={Rinton Press, Incorporated Paramus, NJ}
}

@article{duan2014linear,
  title={Linear-time approximation for maximum weight matching},
  author={Duan, Ran and Pettie, Seth},
  journal={Journal of the ACM (JACM)},
  volume={61},
  number={1},
  pages={1--23},
  year={2014},
  publisher={ACM New York, NY, USA}
}

@book{gabow1974implementation,
  title={Implementation of algorithms for maximum matching on nonbipartite graphs.},
  author={Gabow, Harold Neil},
  year={1974},
  publisher={Stanford University}
}

@article{hopcroft1973n,
  title={An n\^{}5/2 algorithm for maximum matchings in bipartite graphs},
  author={Hopcroft, John E and Karp, Richard M},
  journal={SIAM Journal on computing},
  volume={2},
  number={4},
  pages={225--231},
  year={1973},
  publisher={SIAM}
}

@article{marcus1959diagonals,
  title={Diagonals of doubly stochastic matrices},
  author={Marcus, Marvin and Ree, Rimhak},
  journal={The Quarterly Journal of Mathematics},
  volume={10},
  number={1},
  pages={296--302},
  year={1959},
  publisher={Oxford University Press}
}

@article{KUSHWAHA2025223,
title = {A note on Erdős matrices and Marcus–Ree inequality},
journal = {Linear Algebra and its Applications},
volume = {725},
pages = {223-247},
year = {2025},
issn = {0024-3795},
doi = {https://doi.org/10.1016/j.laa.2025.07.012},
url = {https://www.sciencedirect.com/science/article/pii/S0024379525002964},
author = {Aman Kushwaha and Raghavendra Tripathi}

}

@article{dufosse2018further,
  title={Further notes on Birkhoff--von Neumann decomposition of doubly stochastic matrices},
  author={Dufoss{\'e}, Fanny and Kaya, Kamer and Panagiotas, Ioannis and U{\c{c}}ar, Bora},
  journal={Linear Algebra and its Applications},
  volume={554},
  pages={68--78},
  year={2018},
  publisher={Elsevier}
}

@inproceedings{kulkarni2017minimum,
  title={Minimum birkhoff-von neumann decomposition},
  author={Kulkarni, Janardhan and Lee, Euiwoong and Singh, Mohit},
  booktitle={International Conference on Integer Programming and Combinatorial Optimization},
  pages={343--354},
  year={2017},
  organization={Springer}
}

@book{brualdi2006combinatorial,
  title={Combinatorial matrix classes},
  author={Brualdi, Richard A},
  volume={13},
  year={2006},
  publisher={Cambridge University Press}
}

@article{valls2021birkhoff,
  title={Birkhoff’s decomposition revisited: Sparse scheduling for high-speed circuit switches},
  author={Valls, V{\'\i}ctor and Iosifidis, George and Tassiulas, Leandros},
  journal={IEEE/ACM Transactions on Networking},
  volume={29},
  number={6},
  pages={2399--2412},
  year={2021},
  publisher={IEEE}
}

@article{brualdi1982notes,
  title={Notes on the Birkhoff algorithm for doubly stochastic matrices},
  author={Brualdi, Richard A},
  journal={Canadian Mathematical Bulletin},
  volume={25},
  number={2},
  pages={191--199},
  year={1982},
  publisher={Cambridge University Press}
}

@inproceedings{liu2015scheduling,
  title={Scheduling techniques for hybrid circuit/packet networks},
  author={Liu, He and Mukerjee, Matthew K and Li, Conglong and Feltman, Nicolas and Papen, George and Savage, Stefan and Seshan, Srinivasan and Voelker, Geoffrey M and Andersen, David G and Kaminsky, Michael and others},
  booktitle={Proceedings of the 11th ACM Conference on Emerging Networking Experiments and Technologies},
  pages={1--13},
  year={2015}
}

@article{bojja2018costly,
  title={Costly circuits, submodular schedules and approximate carath{\'e}odory theorems},
  author={Bojja Venkatakrishnan, Shaileshh and Alizadeh, Mohammad and Viswanath, Pramod},
  journal={Queueing Systems},
  volume={88},
  number={3},
  pages={311--347},
  year={2018},
  publisher={Springer}
}

@article{babbush2018encoding,
  title={Encoding electronic spectra in quantum circuits with linear T complexity},
  author={Babbush, Ryan and Gidney, Craig and Berry, Dominic W and Wiebe, Nathan and McClean, Jarrod and Paler, Alexandru and Fowler, Austin and Neven, Hartmut},
  journal={Physical Review X},
  volume={8},
  number={4},
  pages={041015},
  year={2018},
  publisher={APS}
}

@article{low2019hamiltonian,
  title={Hamiltonian simulation by qubitization},
  author={Low, Guang Hao and Chuang, Isaac L},
  journal={Quantum},
  volume={3},
  pages={163},
  year={2019},
  publisher={Verein zur F{\"o}rderung des Open Access Publizierens in den Quantenwissenschaften}
}

@article{berry2019qubitization,
  title={Qubitization of arbitrary basis quantum chemistry leveraging sparsity and low rank factorization},
  author={Berry, Dominic W and Gidney, Craig and Motta, Mario and McClean, Jarrod R and Babbush, Ryan},
  journal={Quantum},
  volume={3},
  pages={208},
  year={2019},
  publisher={Verein zur F{\"o}rderung des Open Access Publizierens in den Quantenwissenschaften}
}

@article{low2024trading,
  title={Trading T gates for dirty qubits in state preparation and unitary synthesis},
  author={Low, Guang Hao and Kliuchnikov, Vadym and Schaeffer, Luke},
  journal={Quantum},
  volume={8},
  pages={1375},
  year={2024},
  publisher={Verein zur F{\"o}rderung des Open Access Publizierens in den Quantenwissenschaften}
}

@article{grover2002creating,
  title={Creating superpositions that correspond to efficiently integrable probability distributions},
  author={Grover, Lov and Rudolph, Terry},
  journal={arXiv preprint quant-ph/0208112},
  year={2002}
}

@misc{GuillermoAlonso2024,
    title = "Intro to quantum read-only memory (QROM)",
    author = "Guillermo Alonso and Juan Miguel Arrazola",
    year = "2024",
    month = "9",
    journal = "PennyLane Demos",
    publisher = "Xanadu",
    howpublished = "\url{https://pennylane.ai/demos/tutorial_intro_qrom}",
    note = "Date Accessed: 2026-08-18"
}

\end{document}